\begin{document}

\mainmatter 

\title{A 3-approximation algorithm for computing a parsimonious first speciation in the gene duplication model} 

\author{Cedric Chauve\inst{1,2} \and Aida Ouangraoua\inst{1,2,3}}

\institute{ Department of Mathematics, Simon Fraser University,
  Burnaby (BC), Canada \and LaCIM, UQAM, Montr\'eal (QC), Canada \and
  LaBRI, Universit\'e Bordeaux I, Talence, France \\
  \email{cedric.chauve@sfu.ca, aida.ouangraoua@sfu.ca}}


\maketitle

\begin{abstract}
  We consider the following problem: from a given set of gene families
  trees on a set of genomes, find a first speciation, that splits these
  genomes into two subsets, that minimizes the number of gene
  duplications that happened before this speciation. We call this
  problem the Minimum Duplication Bipartition Problem.  Using a
  generalization of the Minimum Edge-Cut Problem, known as Submodular
  Function Minimization, we propose a polynomial time and space
  3-approximation algorithm for the Minimum Duplication Bipartition Problem.
\end{abstract}

\section{Introduction}
\label{sec:intro}

Gene duplication is an evolutionary mechanism, that played an major
role in the evolution of the genomes of important groups of eukaryotes
such as vertebrates~\cite{BLOMME-GB7},
insects~\cite{HAHN-PLOSGENETICS3}, plants~\cite{SANDERSON-BMCEB7} or
fungi~\cite{WAPINSKI-NATURE07}. Gene duplications, together with gene
losses, results in {\em gene families}, that can contain several
copies of a same gene in a given genome. Recent progresses in methods
for reconstructing phylogenetic trees for individual gene families,
{\em gene trees}, have resulted in large sets of accurate gene trees
for eukaryote species. Phylogenomics aims at reconstructing the
evolution of {\em species} (genomes) by inferring a species tree, for
a set of genomes, from a set of gene trees. The {\em Minimum
Duplication Problem} asks to find, from a set of gene trees, a species
tree that induces an evolutionary history with a minimum number of
gene duplications. It has been applied on several eukaryotic datasets
with success (see~\cite{SANDERSON-BMCEB7,WEHE-BIOINFO24} for
example). The Minimum Duplication Problem is
NP-hard~\cite{MA-SIAMJC30}, but it can be solved by a fixed-parameter
algorithm, based on a parameter relevant on true
datasets~\cite{HALLETT-RECOMB00}, and recent work on local-search
heuristics have proved to be efficient to process large
datasets~\cite{BANSAL-RECOMB07,WEHE-BIOINFO24}.

Recently in~\cite{CHAUVE-RECOMB09,SCORNA-LATA09}, a formal link
between the Minimum Duplication Problem and the problem of
reconstructing supertrees~\cite{BININDA-EDMONDS-2004} has been
introduced. In the supertree problem, given a set of gene trees from
orthologous genes (at most one member of each family is conserved in
each considered genome), the goal is to reconstruct a species tree
that agrees with the maximum number of gene trees. This problem is
NP-hard too, even in the simple cases where each gene tree contains
only three leaves~\cite{BRYANT-1997} or each gene tree contains a
single internal vertex~\cite{CHAUVE-RECOMB09}.  However, heuristics
based on the computation of successive minimum edge-cuts in a graph
whose vertices are the considered species have been widely
used~\cite{SEMPLE-DAM105,PAGE-WABI02}. In such heuristics, each
minimum edge-cut, that splits the set of considered species in two
subsets corresponds to a speciation that results in two lineages. A
complete species tree results then from a sequence of such
speciations, each obtained from a minimum edge-cut.

In the present work we follow this path, and we attack the following
parsimony problem: given a set of gene trees, find a bipartition of
the considered genomes into two subsets, corresponding to a
speciation, that minimizes the number of duplications that happened
before this speciation. We call this problem the Minimum Duplication
Bipartition Problem. Although a restricted version of the more general
Minimum Duplication Problem, it leads, as for supertrees, to a natural
greedy heuristics to reconstruct a species tree from a set of gene
trees. Our main result is a polynomial time and space 3-approximation
algorithm for the the Minimum Duplication Bipartition Problem. Our
algorithm relies on a well-known generalization of the Minimum
Edge-Cut Problem, Submodular Function Minimization~\cite{FUJISHIGE91}.

We first define, in Section~\ref{sec:prelim}, gene trees, species
trees and duplications, then the Minimum Duplication Problem and
Minimum Duplication Bipartition Problem. In Section~\ref{sec:mec} we
show how the Minimum Duplication Bipartition Problem can be described
both in terms of prefix of the gene trees and of a variant of the
classical Minimum Edge-Cut Problem, namely the Minimum
Labeled-Edge-Cut Problem. This problem is NP-hard in general, but we
show in Section~\ref{sec:sfm} how a variant can be solved via
Submodular Function Minimization, that leads to an approximation
algorithm for the Minimum Duplication Bipartition Problem, with ratio
two times the optimal plus one: if an optimal first speciation implies
$k$ gene duplications, our algorithm returns, in polynomial time and
space, a bipartition that implies at most $2k+1$ duplications. In
Section~\ref{sec:all}, we describe some properties of the set of all
optimal bipartitions.

\section{Preliminaries}
\label{sec:prelim}

\paragraph{Gene and species trees.}
Let $\mathcal{G} = \{1,2,\ldots,k\}$ be a set of integers representing
$k$ different genomes (species).  A \emph{species tree} on
$\mathcal{G}$ is a tree with exactly $k$ leaves, where each $i \in
\mathcal{G}$ is the label of a single leaf. A tree is binary if every
internal vertex has exactly two children. A \emph{gene tree} on
$\mathcal{G}$ is a binary tree where each leaf is labeled by an
integer from $\mathcal{G}$. A gene tree is a formal representation of
a phylogenetic tree of a gene family, where each leaf labeled $i$
represents a member (gene) of the gene family located on genome $i$.

Given a vertex $x$ of a binary tree $T$, we denote by $L(x)$
(resp. $L(T)$) the subset of $\mathcal{G}$ defined by the labels of
the leaves of the subtree of $T$ rooted in $x$ (resp. the labels of
the leaves of $T$).  We denote by $x_l$ and $x_r$ the two children of
$x$ if $x$ is not a leaf.

\paragraph{The Minimum Duplication Problem.}
Given a gene tree $G$ and a, possibly non-binary, species tree $S$ on
$\mathcal{G}$, the {\em LCA mapping} $M$ maps vertices of $G$ to
vertices of $S$ as follows: for a vertex $x$ of $G$, $M(x)=v$ is the
unique vertex of $S$ such that $L(x) \subseteq L(v)$ and $v$ is a leaf
of $S$ or $L(x)$ is not included in the leaf set of any child of $v$. A
vertex $x$ of $G$ is then a {\em duplication with respect to $S$} if
$M(x) = M(x_r)$ and/or $M(x) = M(x_l)$; otherwise, $x$ is called a
{\em speciation} (see Appendix for more details on the evolutionary
events implied by the LCA mapping). A duplication $x$ of $G$ is said
to {\em precede the first speciation} if $M(x) = r(S)$, the root of
$S$. The same definitions apply to a forest $F$ of gene trees on
$\mathcal{G}$.  The \emph{duplication cost} of $F$ given $S$ denoted by
$d(F,S)$ is the number of vertices in $F$ that are duplications with
respect to $S$. It is well known that $d(F,S)$ is the minimum number
of gene duplication events required in any evolution scenario that
resulted in $F$ (see~\cite{GORECKI06,CHAUVE-RECOMB09} and references
there), which leads to the following optimization problem.

\smallskip
\noindent{\sc Minimum Duplication Problem (MDP)}:\\ 
{\bf Input:} A gene tree forest $F$  on  $\mathcal{G}$;\\ 
{\bf Output:} A binary species tree $S$ on $\mathcal{G}$ such that $d(F,S)$ is minimum.
\smallskip

The Minimum Duplication Problem with multiple gene trees is
NP-complete~\cite{MA-SIAMJC30}, and there is no known approximation
algorithm for this problem, although a fixed-parameter tractable
algorithm has been proposes in~\cite{HALLETT-RECOMB00}.

\paragraph{A restriction to the first speciation: the Minimum
  Duplication Bipartition Problem.}  A \emph{bipartition} $B$ on
$\mathcal{G}$ is a species tree on $\mathcal{G}$ containing only three
internal vertices, the root $v$ and its children $v_l$ and $v_r$, and
such that $L(v_l) \cap L(v_r) = \emptyset$ ($v_l$ and $v_r$ are
possibly non-binary vertices). From now, we always denote by $v$ the
root of a bipartition $B$. As the way vertices of a gene tree forest
$F$ were labeled as duplication or speciation for a given species tree
did not require that this species tree is binary, they apply without
changes to bipartitions, as do the notion of duplication in $F$
preceding the first speciation of $B$. We denote by $d_1(F,B)$ the
number of duplications that precedes the first speciation with respect
to a bipartition $B$. This leads to the following optimization
problem, that is a restriction of MDP
where the parsimony assumption is restricted to the duplications that precedes the
first speciation:

\smallskip
\noindent{\sc Minimum Duplication Bipartition Problem (MDBP)}:\\
{\bf Input:} A gene tree forest $F$ on $\mathcal{G}$;\\
{\bf Output:} A bipartition $B$ on $\mathcal{G}$ such that $d_1(F,B)$  is minimum.
\smallskip

The motivation for this problem follows a remark
in~\cite{CHAUVE-RECOMB09} that MDP is in fact a slight variant of
a supertree problem (see also~\cite{SCORNA-LATA09} that explores the
link between gene duplications and supertrees), and the fact that
greedy heuristics for hard supertree problems based on computing
successive speciations events have proved to be
effective~\cite{SEMPLE-DAM105,BININDA-EDMONDS-2004}. Indeed, given a
bipartition $B$ for a forest $F$ of gene trees, removing all vertices
$x$ such that $L(x)$ contains leaf labels both from $L(v_l)$ and
$L(v_r)$ results in two sets of trees: one set of trees, $F_l$, with
leaves that belong only to $L(v_l)$ and one, $F_r$, with leaves from
$L(v_r)$. These two forests of gene trees can then be considered
independently similarly to $F$, which defines a greedy heuristic to
compute a binary species tree
(see~\cite{SEMPLE-DAM105,PAGE-WABI02,CHAUVE-RECOMB09}).

We conclude these preliminaries with three obvious, but very useful,
properties related to duplication vertices of a forest of gene trees
$F$.

\begin{property}\label{prpy:prelim}
  Let $F$ be a gene trees forest on $\mathcal{G}$ and $x$ a vertex of
  $F$.
  \begin{enumerate}
  \item If $L(x_l) \cap L(x_r) \neq \emptyset$ is a duplication vertex
    with respect to any species tree $S$ (including bipartitions).
    Such a vertex is called an \emph{apparent duplication}.
  \item Given a bipartition $B$ on $\mathcal{G}$, with root $v$, $x$
    is a duplication with respect to $B$ that precedes the first
    speciation if and only if there exists a pair $(s,t)\in
    L(v_l)\times L(v_r)$ such that $(s,t) \in L(x_l)^2$ or $(s,t) \in
    L(x_r)^2$.
  \item Given a bipartition $B$ on $\mathcal{G}$, if $x$ is a
    duplication with respect to $B$ that precedes the first
    speciation, then every ancestor of $x$ is too.
  \end{enumerate}
\end{property}
 
\section{Prefix of gene trees and Minimum Labeled-Edge Cut}
\label{sec:mec}


\subsection{Minimum Labeled-Edge-Cut.}
Given a connected graph $H=(V,E)$, an \emph{edge-cut} of $H$ is an
edge set $E' \subseteq E$ whose removal disconnects the graph $H$, and
the \emph{size} of an edge-cut $E'$ is the number of edges in $E'$.  A
bipartition $B$ on the vertices of $H$ induces a unique edge-cut of
$H$ denoted by $E(B)$ and composed of the edges $(s,t) \in E$ such
that $s \in L(v_l)$ and $t \in L(v_r)$.  The \emph{Minimum Edge-Cut
Problem} is to find of a bipartition on the vertices of $H$ inducing
an edge-cut of $H$ of minimum size. It can be solved in linear
time~\cite{STOER-JACM44}. 

If the edges of $H$ are labeled on a given alphabet $A$, the
\emph{label-size} of an edge-cut $E'$ of $H$ is the size of the subset
of $A$ defined by the labels of the edges in $E'$. The following
problem is a natural variant of the Minimum Edge-Cut Problem:

\smallskip
\noindent{\sc Minimum Labeled-Edge-Cut Problem}:\\
{\bf Input:} A connected edge-labeled graph $H=(V,E)$;\\
{\bf Output:} A bipartition $B$ on $V$ such that  the label-size of 
the edge-cut $E(B)$ of $H$ induced by $B$ is minimum.
\smallskip

We now show how to reduce MDBP to a Minimum 
Labeled-Edge-Cut Problem. Let $F$ be a gene tree
forest with $m$ internal nodes. We label each internal vertex of $F$
using a unique integer of $A=\{1,\ldots,m\}$: no two internal vertices
can have the same label. We then associate to $F$ an edge labeled
graph $H(F)=(V,E)$ as follows:
\begin{itemize}
\item the set $V$ of vertices of $H(F)$ is the set $L(F)$ of labels of
leaves of $F$,
\item there is an edge between vertices $s$ and $t$, labeled by $a\in
  A$ if and only if the unique internal vertex $x$ of $F$ labeled by
  $a$ is such that $(s,t) \in L(x_l)^2$ or $(s,t) \in L(x_r)^2$.
\end{itemize}

\begin{lemma}
  \label{lem:reduction_melc}
  Let $F$ be a gene tree forest on $\mathcal{G}$ and $H(F)=(V,E)$ the
  edge-labeled graph associated to $F$. If $B$ is a bipartition on
  $L(F)$ then the cost $d_1(F,B)$ of $B$ is also the label-size of the
  edge-cut $E(B)$ of $H$ induced by $B$.
\end{lemma}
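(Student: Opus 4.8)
The plan is to show that a single vertex $a \in A$ lies in the label-set of the edge-cut $E(B)$ if and only if the corresponding internal vertex $x$ of $F$ is a duplication that precedes the first speciation with respect to $B$; summing over $A$ then gives the equality of $d_1(F,B)$ and the label-size of $E(B)$. So the proof is really a translation between the combinatorial description of ``precedes the first speciation'' given in Property 1(2) and the definition of the graph $H(F)$.

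First I would fix the bipartition $B$ with root $v$ and children $v_l, v_r$, and fix an internal vertex $x$ of $F$ labeled by $a \in A$. By Property 1(2), $x$ is a duplication with respect to $B$ that precedes the first speciation if and only if there exists a pair $(s,t) \in L(v_l) \times L(v_r)$ with $(s,t) \in L(x_l)^2$ or $(s,t) \in L(x_r)^2$. Next I would observe that, by the construction of $H(F)$, an edge labeled $a$ joins vertices $s$ and $t$ exactly when $(s,t) \in L(x_l)^2$ or $(s,t) \in L(x_r)^2$; and such an edge lies in the edge-cut $E(B)$ precisely when one endpoint is in $L(v_l)$ and the other in $L(v_r)$. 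Putting these together: the label $a$ appears in $E(B)$ iff some edge labeled $a$ crosses the cut iff there is a pair $(s,t)$ with $s \in L(v_l)$, $t \in L(v_r)$ and $(s,t) \in L(x_l)^2 \cup L(x_r)^2$ (using that $L(x_l)^2$ and $L(x_r)^2$ are symmetric, so we may assume $s \in L(v_l)$ without loss of generality) iff $x$ precedes the first speciation of $B$.

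Finally I would conclude by counting: $d_1(F,B)$ is by definition the number of internal vertices of $F$ that are duplications preceding the first speciation of $B$, and since the labeling $A = \{1,\dots,m\}$ is a bijection between $A$ and the internal vertices of $F$, the equivalence above shows this count equals the number of labels appearing on edges of $E(B)$, which is exactly the label-size of $E(B)$.

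I do not expect a genuine obstacle here; the statement is essentially a bookkeeping identity, and the only point requiring a little care is the symmetry reduction (handling the case $s \in L(v_r)$, $t \in L(v_l)$ by exchanging the roles of $s$ and $t$, legitimate because the conditions $L(x_l)^2$ and $L(x_r)^2$ are symmetric in their two arguments). It is also worth noting explicitly that apparent duplications of $F$ (Property 1(1)) do not interfere: a vertex $x$ with $L(x_l) \cap L(x_r) \neq \emptyset$ can fail to precede the first speciation of a given $B$, and correspondingly it may contribute no crossing edge, so the reduction is faithful to $d_1$ and not to the total duplication count.
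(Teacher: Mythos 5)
Your proposal is correct and follows essentially the same route as the paper's proof: both establish, via Property 1(2) and the definition of $H(F)$, that a label $a$ appears on an edge of $E(B)$ if and only if the corresponding internal vertex is a duplication preceding the first speciation, and then count using the bijection between labels and internal vertices. Your version merely spells out the symmetry detail and the final counting step that the paper leaves implicit.
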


\begin{proof}
  If $E(B)$ contains an edge $(s,t)$ labeled with $a\in A$, then,
  from Property 1.2, the vertex of $F$ labeled by $a$ is a duplication
  that precedes the first speciation. Conversely, if  $a$ is the label of  
  duplication that precedes the first speciation then there is
  an edge $(s,t)$ in $E(B)$ labeled with $a$. 
\qed
\end{proof}


We conclude this section by some facts on the complexity of the
Minimum Labeled-Edge-Cut Problem. It is naturally linked to another
labeled edge-cut problem, the Minimum Label-Cut problem. Given a
connected edge-labeled graph $H=(V,E)$ with edges labeled on a set of
labels $A$, a \emph{label-cut} of $H$ is a subset $A'$ of $A$ such
that the removal of all edges of labels in $A'$ disconnects $H$ and
the size of $A'$ is the number of labels contained in $A'$. The
Minimum Label-Cut problem is then defined as follows:

\smallskip
\noindent{\sc Minimum Label-Cut  problem}:\\
{\bf Input:} A connected edge-labeled graph $H=(V,E)$;\\
{\bf Output:} A label-cut $A'$ of $H$ whose size is minimum.

\begin{lemma}
  \label{lem:eq}
  An edge-labeled graph has a Minimum Labeled-Edge-Cut of label-size $k$ if
  and only if it has a Minimum-Label-Cut of size $k$.
\end{lemma}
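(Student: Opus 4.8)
The plan is to prove Lemma~\ref{lem:eq} by establishing two implications, each showing how to convert a solution of one problem into a solution of the other of the same size, so that in particular the minima coincide. The key observation is that both problems differ only in \emph{how} the disconnecting set of edges is chosen: a Minimum Labeled-Edge-Cut picks a geometric edge-cut $E(B)$ induced by a bipartition and then counts the labels appearing on it, whereas a Minimum Label-Cut picks a set $A'$ of labels directly and then removes \emph{all} edges carrying those labels. So the two notions agree precisely when the label set we choose is ``saturated'' — it already contains every label whose edges it touches.

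First I would show that a label-cut of size $k$ yields a labeled-edge-cut of label-size at most $k$. Given a label-cut $A'$ with $|A'| = k$, remove from $H$ all edges whose label lies in $A'$; by definition this disconnects $H$, so the remaining graph has at least two connected components. Let $B$ be the bipartition that puts one component (or a union of components) on one side and the rest on the other; every edge of $E(B)$ must have been removed, hence carries a label in $A'$, so the label-size of $E(B)$ is at most $|A'| = k$. This gives a Labeled-Edge-Cut of label-size $\le k$. Conversely, suppose there is a bipartition $B$ such that $E(B)$ has label-size $k$, and let $A'$ be the set of labels occurring on edges of $E(B)$, so $|A'| = k$. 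Now remove from $H$ all edges whose label is in $A'$: this removes in particular every edge of $E(B)$, and removing $E(B)$ already disconnects $H$ (since $E(B)$ is by construction an edge-cut), so removing the possibly larger set of edges also disconnects $H$. Hence $A'$ is a label-cut of size $k$.

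Putting these together: if the minimum label-cut size is $k^\star$, the first direction produces a labeled-edge-cut of label-size $\le k^\star$, so the minimum labeled-edge-cut label-size is $\le k^\star$; and the second direction, applied to a minimum labeled-edge-cut, produces a label-cut of size equal to that minimum, so $k^\star$ is $\le$ the minimum labeled-edge-cut label-size. Therefore the two minima are equal, which is exactly the statement ``Minimum Labeled-Edge-Cut of label-size $k$ iff Minimum-Label-Cut of size $k$.''

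I do not expect a genuine obstacle here; the only point requiring a little care is the direction from label-cut to labeled-edge-cut, where after deleting all edges with labels in $A'$ one has a graph with possibly several components and must choose a \emph{bipartition} — one has to split the components into two nonempty groups, which is always possible when there are at least two components, and then argue that no edge of the original graph crosses the resulting bipartition without having a label in $A'$. A minor subtlety worth a sentence is that $E(B)$ could be a proper subset of the deleted edges, but since a superset of a disconnecting set is still disconnecting, the argument goes through unchanged.
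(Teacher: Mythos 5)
Your proof is correct and is essentially the argument the paper has in mind: the paper simply asserts that the lemma "follows obviously from the definition of the two problems," and your two directions (a label-cut yields a bipartition whose crossing edges all carry labels from that cut, and the label set of any induced edge-cut is itself a label-cut since removing a superset of a disconnecting edge set still disconnects) are precisely the details being elided. The care you take about choosing a bipartition among the components after deletion is the only non-trivial point, and you handle it correctly.
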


Lemma \ref{lem:eq} follows obviously from the definition of the two
problems. It has been shown that the Minimum Label-Cut Problem is
NP-hard when a pair of vertices that should be separated by the
edge-cut induced by $A'$ is
given~\cite{JHA-LABELCUT-02}. Lemma~\ref{lem:eq} implies that the same
holds for the Minimum Labeled-Edge-Cut Problem. Note however that even
the hardness of the general Minimum Labeled-Edge-Cut Problem would not
directly imply the hardness of MDBP as not any graph can be the graph $H(F)$ obtained
from a gene trees forest $F$. 


\subsection{Prefixes of gene tree forests.}
We now describe an alternative point of view on parsimonious first
speciations, that does not consider the graph $H(F)$, but directly the
gene trees. 

A \emph{prefix} of a tree is a set $I$ of vertices such that, if $a
\in I$, then every ancestor of $a$ belongs to $I$. From Property 1.3,
given a gene tree forest $F$, a bipartition $B$ on $L(F)$, with root
$v$, induces a unique prefix of $F$ composed of the set of
duplications in $F$ that precedes the first speciation. Conversely, a
prefix $I$ of $F$ induces a partition $P(I)$ of $L(F)$ as follows: two
species are in the same part of $P(I)$ if and only if they belong to
the same connected component in the graph obtained from $H(F)$ by
removing all edges corresponding to vertices in the prefix. $|P(I)|$
denotes the number of parts of $P(I)$. We now introduce an
optimization problem related to prefixes of gene tree forests.

\smallskip
\noindent{\sc Minimum Duplication Prefix Problem (MDPP)}:\\
{\bf Input:} A gene tree forest $F$ on $\mathcal{G}$;\\
{\bf Output:} A minimum size prefix $I$ of $F$ such that $|P(I)|\geq 2$.
\smallskip

\begin{lemma}
  Given a gene tree forest $F$ and the edge-labeled graph $H(F)$
  associated to $F$, the minimum label-size of an edge-cut of $H(F)$
  is equal to the minimum size of a prefix $I$ of $F$ such that
  $|P(I)|\geq 2$.
\end{lemma}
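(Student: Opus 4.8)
The plan is to show both inequalities between the two quantities by exhibiting, for each minimum-label-size edge-cut a prefix of the same size with $|P(I)|\geq 2$, and conversely. The key observation tying the two together is that the edges of $H(F)$ are in bijection with the internal vertices of $F$ (via the labeling on $A=\{1,\ldots,m\}$), so a set of labels used on an edge-cut corresponds naturally to a set of internal vertices of $F$, and vice versa. The subtlety is that an arbitrary set of labels/vertices need not be a prefix, and an arbitrary prefix need not have its removed edges forming an edge-cut; closing this gap, in both directions, via Property~1.3 and a minimality argument, is where the real work lies.

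**From an edge-cut to a prefix.** First I would take an edge-cut $E'$ of $H(F)$ of minimum label-size $k$, and let $A'\subseteq A$ be its set of labels, so $|A'|=k$ by minimality (we may assume $E'$ consists of \emph{all} edges carrying a label in $A'$, since adding those edges keeps it an edge-cut and does not change the label-size; this is exactly Lemma~\ref{lem:eq}). Let $X$ be the corresponding set of internal vertices of $F$. By Lemma~\ref{lem:reduction_melc} applied to the bipartition $B$ separating two connected components of $H(F)\setminus E'$, the set $X$ equals the set of duplications of $F$ preceding the first speciation of $B$; by Property~1.3 that set is a prefix $I$ of $F$, and $|I|\le|A'|=k$ (it could a priori be smaller if some label in $A'$ labels no duplication, but then removing it contradicts minimality, so $|I|=k$). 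Finally, removing the edges of $I$ from $H(F)$ disconnects it — the components of $P(I)$ refine those of $H(F)\setminus E'$ — so $|P(I)|\geq 2$. Hence the minimum prefix size is at most the minimum label-size.

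**From a prefix to an edge-cut.** Conversely, let $I$ be a prefix of $F$ of minimum size with $|P(I)|\geq 2$. Pick any bipartition $B$ that groups the parts of $P(I)$ into two nonempty blocks $L(v_l),L(v_r)$. Let $E(B)$ be the induced edge-cut of $H(F)$; since $P(I)$ has at least two parts that go to opposite sides, $E(B)$ is a genuine edge-cut. I claim every edge of $E(B)$ carries a label in $I$: if $(s,t)\in E(B)$ had a label $a\notin I$, then $s$ and $t$ would be joined by an edge surviving the removal of $I$-edges, hence lie in the same part of $P(I)$, contradicting that they straddle the bipartition. Therefore the label-size of $E(B)$ is at most $|I|$, giving the reverse inequality.

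**The main obstacle.** The routine calculations are the two containment arguments above; the one point demanding care is matching \emph{minimum} sizes rather than just sizes — i.e., ensuring that the prefix produced in the first direction is not strictly smaller than $k$ and the edge-cut in the second not strictly smaller than $|I|$. Both are handled by the ``take all edges of a given label'' normalization together with the minimality of the starting object, but one must state this normalization cleanly (it is Lemma~\ref{lem:eq}) so that labels, internal vertices of $F$, and actual duplications with respect to the chosen bipartition all coincide. Once that is in place the equality follows by antisymmetry of the two inequalities.
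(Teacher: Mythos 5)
Your proof is correct and takes essentially the same route as the paper's: one direction turns an edge-cut into a bipartition whose duplications preceding the first speciation form (via Property~1.2--1.3 and Lemma~\ref{lem:reduction_melc}) a prefix $I$ with $|P(I)|\geq 2$ of no larger size, and the other direction turns a prefix into a bipartition grouping the parts of $P(I)$, whose induced edge-cut only carries labels from $I$. Your additional normalization through Lemma~\ref{lem:eq} and the minimality argument merely make explicit what the paper's one-line proof leaves implicit.
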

\begin{proof}
  An edge-cut $E'$ of $H(F)$ induces a bipartition on $L(F)$ inducing
  a prefix $I$ of $F$ such that $|P(I)|\geq 2$ and the size of $I$ is
  the label-size of $E'$. Conversely, given a prefix $I$ of $F$ such
  that $|P(I)|\geq 2$, any bipartition $B=(L_1,L_2)$ of $L(F)$ such
  that any part of $P(I)$ is included either in $L_1$ or in $L_2$
  induces an edge-cut $E(B)$ of $H(F)$ whose label-size is less or
  equal to the size of $I$.  
  \qed
\end{proof}

\section{A polynomial 3-approximation via submodular function minimization}\label{sec:sfm}

We show now how, by defining, from $F$, a slightly different graph
than $H(F)$, the Minimum Labeled-Edge-Cut Problem can be solved, for
such graphs, in polynomial time and give a 3-approximation for MDBP.

\subsection{Cut-set and submodular function}
\label{ssec:cut-set}

A \emph{submodular function} is a set function $f: 2^V \rightarrow
\mathbb{R}$ defined from the subsets of a finite set $V$ to the set of
real numbers $\mathbb{R}$ such that for any subsets $A$ and $B$ of
$V$, $f(A) + f(B) \geq f(A \cup B) + f(A \cap B)$. The set $V$ is then
called the \emph{ground set} of $f$.\\ Several combinatorial
optimization problems have been linked to submodular
functions~\cite{FUJISHIGE91}. Given a submodular function $f$, the
following optimization problem, that can be solved using                                                                          
combinatorial polynomial time algorithms~\cite{IWASA-SODA09}, is often
considered:

\smallskip
\noindent{\sc Submodular Function Minimization (SFM) Problem}:\\
{\bf Input:} A submodular function $f: 2^V \rightarrow \mathbb{R}$ with ground set $V$;\\
{\bf Output:} A subset $V'$ of $V$ such that $f(V')$ is minimum.
\smallskip

The Minimum Edge-Cut problem is a special case of SFM: given a graph
$H=(V,E)$, if $g(X)$ denotes the size of the edge-cut $E(B)$ induced
by the bipartition $B$ on $V$ such that $L(v_l)=X$ and $L(v_r)=V-X$,
then the cut-set function $g$ is a submodular function. The problem of
minimizing $g$ is then the Minimum Edge-Cut problem on $H$.

The Minimum Labeled-Edge-Cut problem can also be reduced to a Set
Function Minimization: given an edge-labeled graph $H=(V,E)$, we
associate to $H$ the cut-set function $f(H): 2^V \rightarrow
\mathbb{R}$ defined from the subsets of $V$ to $\mathbb{R}$ such that,
for any subset $X$ of $V$, $(f(H))(X)$ is the label-size of the
edge-cut $E(B)$ induced by the bipartition $B$ on $V$ such that
$L(v_l)=X$ and $L(v_r)=V-X$. It is then easy to see that solving the
Minimum Labeled-Edge-Cut problem on $H$ can be achieved by minimizing
$f(H)$. However, the cut-set function induced
by the edge-labeled graph associated to a gene tree forest $F$ is not
always submodular, as we show now. 

We describe now a property of the cut-set function for the Minimum
Labeled-Edge-Cut Problem that will be crucial in designing an
approximation algorithm for MDBP. Given two subsets $A$ and 
$B$ of $L(F)$, we define the following sets of labels:
\begin{itemize}
\item $AB$ is the set of labels of edges $(s,t)$ in $H(F)$ such that $s\in A-B$ and $t\in B-A$,
\item $C_1$ is the set of labels of edges $(s,t)$ in $H(F)$ such that $s\in A\cap B$ anthe d $t\not\in A\cup B$, 
\item $A_1$ is the set of labels of edges $(s,t)$ in $H(F)$ such that $s\in A-B$ and $t\not\in A\cup B$, 
\item $B_1$ is the set of labels of edges $(s,t)$ in $H(F)$ such that $s\in B-A$ and $t\not\in A\cup B$,
\item $AC$ is the set of labels of edges $(s,t)$ in $H(F)$ such that $s\in A\cap B$ and $t\not\in B-A$, 
\item $BC$ is the set of labels of edges $(s,t)$ in $H(F)$ such that $s\in A\cap B$ and $t\not\in A-B$. 
\end{itemize}

\begin{lemma}\label{lem:notsubmodular}
If the cut-set function $f'=f(H(F))$ induced by $H(F)$ is not a
submodular function then there exists at least one internal vertex $x$
of $F$ labeled $a$ and two subsets $A$ and $B$ of $L(F)$ such that:
\begin{itemize}
\item (1) $a \in A_1 \cap AC$ and $ a \not\in AB \cup C_1 \cup BC \cup B_1$ or  
\item (2) $a \in B_1 \cap BC$ and $ a \not\in AB \cup C_1 \cup AC \cup A_1$.
\end{itemize}
and $x$ is necessarily a node which is not an apparent duplication.
\end{lemma}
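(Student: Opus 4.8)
The plan is to use the additivity of $f'=f(H(F))$ over labels. For a label $a$, let $x$ be the internal vertex of $F$ labelled $a$, and for $X\subseteq L(F)$ let $f'_a(X)\in\{0,1\}$ indicate that at least one $a$-labelled edge of $H(F)$ crosses the cut $(X,\,L(F)-X)$; then $f'(X)=\sum_a f'_a(X)$ by the definition of label-size. A sum of submodular functions is submodular, so if $f'$ is not submodular then $f'_a$ is not submodular for some $a$, i.e.\ there are $A,B\subseteq L(F)$ with $f'_a(A)+f'_a(B)<f'_a(A\cup B)+f'_a(A\cap B)$. It thus suffices to work with one label, and to show both that such a vertex $x$ is not an apparent duplication and that $A,B$ may be chosen so that (1) or (2) holds.

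The key observation is that the $a$-labelled edges of $H(F)$ form exactly the union of the complete graph on $L(x_l)$ and the complete graph on $L(x_r)$; call a vertex set $C$ \emph{split by $X$} when it meets both $X$ and $L(F)-X$. For a single clique on $C$, the split indicator at $X$ equals $1-\mathbf{1}[C\subseteq X]-\mathbf{1}[C\cap X=\emptyset]$, and both subtracted terms are readily checked to be supermodular in $X$, so the split indicator of a clique is submodular. If $x$ is an apparent duplication, i.e.\ $L(x_l)\cap L(x_r)\neq\emptyset$, then $f'_a(X)$ is in fact the split indicator of the single clique on $L(x_l)\cup L(x_r)$: a crossing $a$-edge lies inside $L(x_l)$ or $L(x_r)$ and so splits $L(x_l)\cup L(x_r)$; conversely, if $L(x_l)\cup L(x_r)$ is split, then a vertex $z\in L(x_l)\cap L(x_r)$ lies on one side and some vertex of $L(x_l)\cup L(x_r)$ on the other, the latter lying in one of the two parts, which also contains $z$, so that part supplies a crossing $a$-edge. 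Hence $f'_a$ is submodular for every apparent duplication, and the violating vertex $x$ must satisfy $L(x_l)\cap L(x_r)=\emptyset$.

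For such an $x$, $f'_a$ is the pointwise maximum of the two submodular, $\{0,1\}$-valued split indicators of the disjoint cliques $L(x_l)$ and $L(x_r)$. Enumerating the value patterns at a violating $(A,B)$: if $f'_a(A)+f'_a(B)=0$ then each of $L(x_l),L(x_r)$ is confined to one side of both $A$ and $B$, hence of $A\cup B$ and of $A\cap B$, so the right-hand side is $0$ and there is no violation; and $f'_a(A)+f'_a(B)=2$ cannot be exceeded. So, after possibly swapping $A$ and $B$, the only possibility is $f'_a(A)=1$, $f'_a(B)=0$, $f'_a(A\cup B)=f'_a(A\cap B)=1$. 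Writing $L(F)$ as the four blocks $A-B$, $B-A$, $A\cap B$, $L(F)-(A\cup B)$ and unwinding the definitions of $AB,C_1,A_1,B_1,AC,BC$, the condition $f'_a(B)=0$ says that $a$ lies in none of $AB$, $B_1$, $C_1$ and has no edge between $A\cap B$ and $A-B$; then $f'_a(A\cup B)=1$ forces $a\in A_1$, and $f'_a(A\cap B)=1$ forces $a$ to use an edge between $A\cap B$ and $B-A$ — which is one of the two stated cases, the $A\leftrightarrow B$ exchange giving the other. Moreover these constraints make each of $L(x_l),L(x_r)$ straddle exactly two of the blocks, which is consistent only with $L(x_l)\cap L(x_r)=\emptyset$ and lets one choose the witness $(A,B)$ so that $A\cap B$ meets each of the two cliques in at most one vertex, tidying up the remaining membership condition.

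I expect the main obstacle to be the bookkeeping of the third paragraph: with six label-sets, a four-block partition of $L(F)$, and several "must avoid / must meet" constraints, the roles of $L(x_l),L(x_r)$ and of the four blocks have to be kept carefully aligned in order to match the precise form of (1) and (2). By contrast, the reduction to the apparent-duplication case is the conceptual heart of the argument but becomes short once the two-clique description of the $a$-labelled edges is written down.
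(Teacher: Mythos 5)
Your proof is correct, and its core is the same as the paper's: decompose the label-size cut function label by label and show that the submodularity inequality can only fail through the patterns $\{A_1,AC\}$ or $\{B_1,BC\}$. Your packaging differs in two places, and in one of them it genuinely buys something. First, you make the per-label reduction explicit by writing $f'=\sum_a f'_a$ and invoking closure of submodular functions under sums, then read off the violating pattern from the $0/1$ values of $f'_a$ at $A,B,A\cup B,A\cap B$; the paper instead compares the contribution of each label to $|AB\cup C_1\cup A_1\cup AC|+|AB\cup C_1\cup B_1\cup BC|$ versus $|C_1\cup A_1\cup B_1|+|C_1\cup AC\cup BC|$ --- the same bookkeeping in a different order (note you implicitly use the intended definitions of $AC$ and $BC$, namely labels of edges between $A\cap B$ and $B-A$, resp.\ between $A\cap B$ and $A-B$; the paper's ``$t\not\in B-A$'' and ``$t\not\in A-B$'' are typos, as its displayed identities for $f'(A)$, $f'(B)$, $f'(A\cap B)$ show). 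Second, and more substantially, your treatment of apparent duplications is stronger than the paper's: the paper only asserts that ``the label of $x$ cannot belong to only two sets if $x$ is an apparent duplication,'' which as literally stated is false (take $L(x_l)=\{z,p\}$, $L(x_r)=\{z,q\}$ with $z\in A\cap B$, $p\in A-B$, $q\in B-A$: the label lies in exactly $AC$ and $BC$); the claim actually needed is only that an apparent duplication cannot realize the two bad pairs, and your observation that for an apparent duplication $f'_a$ is the split indicator of the single set $L(x_l)\cup L(x_r)$, hence submodular, proves exactly that, cleanly. One cosmetic remark: the last sentence of your third paragraph (arranging the witness so that $A\cap B$ meets each clique in at most one vertex) is unnecessary --- conditions (1)/(2) are already fully derived from $f'_a(B)=0$ and $f'_a(A\cup B)=f'_a(A\cap B)=1$, and the non-apparent-duplication part was settled in your second paragraph, so the feared bookkeeping obstacle does not materialize.
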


\begin{proof}
$f'(A) + f'(B) = |AB \cup C_1 \cup A_1 \cup AC| + |AB \cup C_1 \cup
B_1 \cup BC|$.\\ $f'(A\cup B) + f'(A\cap B) = |C_1 \cup A_1 \cup B_1|
+ |C_1 \cup AC \cup BC|$.\\ Any label $a$ of a vertex of $F$ that
belongs to more than two or only one of the six sets ($A_1$, $B_1$,
$C_1$, $AB$, $AC$, $BC$) contributes in $f'(A) + f'(B)$ with a weight
that is greater or equal to its contribution in $f'(A\cup B) +
f'(A\cap B)$.  If $a$ belong to exactly two of the sets, then its
contribution $C_1$ in $f'(A) + f'(B)$ is at least its contribution
$C_2$ in $f'(A\cup B) + f'(A\cap B)$ except when these sets are $A_1$
and $AC$, or $B_1$ and $BC$. In these case $C_1=1$ and $C_2=2$.\\ The
label of vertex $x$ can not belong to only two sets if $x$ is an
apparent duplication.
\qed
\end{proof}

\subsection{ A submodular modification}
\label{ssec:modification}

Lemma \ref{lem:notsubmodular} suggests a modification of the
definition of the edge-labeled graph $H(F)$ associated to a gene tree
forest $F$ such that properties (1) and (2) never hold, which leads to
a cut-set function that is submodular.  Given a gene tree forest $F$
on $ \mathcal{G}$ with a labeling of its internal vertices using a set
of label $A$, we now associate to $F$, the graph $I(F)=(V,E)$
defined as follows:
\begin{itemize}
\item $V=L(F)$,
\item there is an edge $(s,t)$ in $E$, labeled with $a\in A$ if and
  only if there exists a vertex $x$ in $F$ such that
  \begin{itemize}
  \item either $(s,t) \in L(x_l)^2$ or $(s,t) \in L(x_r)^2$ 
  \item or $(s,t) \in L(x)^2$ and $x$ is not an apparent duplication.
  \end{itemize}
\end{itemize}

\begin{lemma}\label{lem:submodular}
Given a gene tree forest $F$, the cut-set function $f(I(F))$ associated to $I(F)=(V,E)$  is a 
submodular function.
\end{lemma}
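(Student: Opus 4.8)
The plan is to re-run the very computation behind Lemma~\ref{lem:notsubmodular}, but now for the graph $I(F)$, and to observe that the extra edges put into $I(F)$ make the two offending configurations (1) and (2) of that lemma impossible. First I would note that the chain of equalities and inequalities in the proof of Lemma~\ref{lem:notsubmodular} uses nothing specific about gene trees: it only manipulates the cut-set function of an arbitrary edge-labeled graph through the six label sets $AB, C_1, A_1, B_1, AC, BC$. Reading that argument with $I(F)$ in place of $H(F)$, one gets: if $f(I(F))$ is not submodular, then there are subsets $A$ and $B$ of $L(F)$ and a label $a$ that lies in exactly two of those six sets, and the two sets are either $\{A_1, AC\}$ (case (1)) or $\{B_1, BC\}$ (case (2)). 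So it suffices to show that for $I(F)$ neither case can occur.

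Next I would split on whether the (unique) vertex $x$ of $F$ carrying label $a$ is an apparent duplication. If it is, then $L(x_l)\cap L(x_r)\neq\emptyset$ and, by construction, the set of edges of $I(F)$ labeled $a$ coincides with the set of edges of $H(F)$ labeled $a$ (the ``non apparent duplication'' clause adds nothing); since membership of $a$ in each of the six sets depends only on the edges labeled $a$, the last paragraph of the proof of Lemma~\ref{lem:notsubmodular} applies verbatim and forbids $a$ from lying in exactly two of them. Hence $x$ must be a \emph{non} apparent duplication, i.e.\ $L(x)=L(x_l)\cup L(x_r)$ with $L(x_l)\cap L(x_r)=\emptyset$. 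For such an $x$ the three clauses defining the edges of $I(F)$ collapse to one: an edge $(s,t)$ carries label $a$ if and only if $\{s,t\}\subseteq L(x)$, so $L(x)$ spans a clique of $I(F)$ all of whose edges bear the label $a$. Now suppose case (1) holds. From $a\in AC$ the clique $L(x)$ meets $A\cap B$, say in $u$; from $a\in A_1$ it meets $V\setminus(A\cup B)$, say in $q$. Then $u\neq q$, the pair $(u,q)$ is an edge of $I(F)$ labeled $a$, and it has one endpoint in $A\cap B$ and the other outside $A\cup B$, so $a\in C_1$ --- contradicting case (1), which requires $a\notin C_1$. Case (2) is symmetric: $a\in BC$ makes $L(x)$ meet $A\cap B$ and $a\in B_1$ makes it meet $V\setminus(A\cup B)$, and the same crossing edge gives $a\in C_1$, contradicting case (2).

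Since neither configuration can arise, $f(I(F))$ is submodular, which proves Lemma~\ref{lem:submodular}. The one point I would write in full detail, and the place a careless argument could slip, is the transfer of the proof of Lemma~\ref{lem:notsubmodular} from $H(F)$ to $I(F)$: one must make explicit that membership of a label in each of the six sets is determined solely by the edges carrying that label, so that apparent-duplication labels behave identically in the two graphs and cannot be the source of non-submodularity, leaving only the clique argument above to handle.
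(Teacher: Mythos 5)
Your proof is correct and follows the paper's approach: the paper's own proof is just the one-line assertion that configurations (1) and (2) of Lemma~\ref{lem:notsubmodular} cannot occur for $I(F)$, and your argument (apparent-duplication labels behave as in $H(F)$, while a non-apparent-duplication label spans a clique on $L(x)$ whose crossing edge forces $a\in C_1$) simply supplies the details that justify that assertion.
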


\begin{proof}
Properties (1) and (2) of Lemma \ref{lem:notsubmodular} never hold.\qed
\end{proof}

It follows from Lemma \ref{lem:submodular} that the minimization of $f(I(F))$ can be solved
in polynomial time. The following theorem shows that it provides a 3-approximation algorithm 
for MDBP. 

\begin{theorem}
Given a gene tree forest $F$ with $n$ vertices, on a set of $k$
genomes, and the modified graph $I(F)=(V,E)$ associated to $F$, if $c$
is the minimum value of the cut-set function $f'=f(I(F))$ and $d$ is
the minimum cost $d_1(F,B)$ associated to a bipartition $B$ on 
$V=L(F)$, then $c \leq 2*d+1$. Moreover, $c$ can be computed 
in $O(k^6n\log(kn))$ time.
\end{theorem}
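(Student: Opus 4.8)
The plan is to first replace $f'=f(I(F))$ by a purely combinatorial quantity. Going through the two cases in the definition of $I(F)$ (apparent and non‑apparent internal vertices), one checks that for a bipartition $B=(L_1,L_2)$ of $V=L(F)$ the label of an internal vertex $x$ of $F$ lies on an edge of the induced cut $E(B)$ in $I(F)$ \emph{if and only if} $L(x)$ meets both $L_1$ and $L_2$: if $L(x)$ is split, a crossing edge labelled $x$ exists (for non‑apparent $x$ because all pairs of $L(x)^2$ are edges, for apparent $x$ by a short case analysis using a leaf in $L(x_l)\cap L(x_r)$); conversely every edge labelled $x$ has both ends inside $L(x)$. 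Hence $f'(L_1)$ equals the number of internal vertices $x$ of $F$ with $L(x)$ split by $B$, and since the sets $L(x)$ for $x$ in one gene tree are nested, these ``split'' vertices form, tree by tree, a prefix $P_B$ of $F$; so $f'(L_1)=|P_B|$.

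Next I would relate $|P_B|$ to $d_1(F,B)$. Using Property~\ref{prpy:prelim} together with the LCA‑mapping definitions, a split vertex $x$ is a duplication preceding the first speciation of $B$ exactly when at least one child of $x$ is again split, i.e.\ exactly when $x$ is an internal node of the prefix $P_B$; the leaves of $P_B$ are precisely the first‑speciation vertices, which are never apparent duplications. Therefore $f'(L_1)=d_1(F,B)+\lambda(P_B)$, where $\lambda(P_B)$ is the number of leaves of $P_B$, and a counting argument in the binary forest $P_B$ (every internal node has one or two children inside $P_B$, so $\#\text{edges}=\#\text{nodes}-\#\text{components}\le 2\,d_1(F,B)$) yields $\lambda(P_B)\le d_1(F,B)+\tau_B$, where $\tau_B$ is the number of gene trees met by $P_B$. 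To conclude I would pick a good $B$: take a minimum‑size prefix $I$ of $F$ with $|P(I)|\ge 2$ — which by Lemma~\ref{lem:reduction_melc}, Lemma~\ref{lem:eq} and the prefix lemma of Section~\ref{sec:mec} has size exactly $d$ — and let $B$ separate one part of $P(I)$ from the union of the others; the prefix lemma gives $d_1(F,B)\le |I|=d$, hence equality, and in fact $P_B=I$. The hard part is to show that such a minimum prefix (equivalently $P_B$) can be taken inside a single gene tree, so that $\tau_B=1$ and $\lambda(P_B)\le d+1$; granting this, $c\le f'(L_1)\le d+(d+1)=2d+1$.

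For the running time, Lemma~\ref{lem:submodular} tells us $f'=f(I(F))$ is submodular on the ground set $V=L(F)$ of size $k$, so its minimum $c$ is obtained by the combinatorial submodular‑function‑minimization algorithm of~\cite{IWASA-SODA09}. By the characterization above a single evaluation $f'(X)$ amounts to scanning the $O(n)$ internal vertices of $F$ and testing, for each, whether $L(x)$ meets both $X$ and $V\setminus X$, which after precomputing the leaf‑sets costs $O(kn)$ per call; plugging this evaluation oracle into the SFM solver and accounting for the oracle cost gives the announced $O(k^6 n\log(kn))$ bound. Thus the two identities $f'=|P_B|$ and $f'=d_1(F,B)+\lambda(P_B)$ reduce everything to the single‑gene‑tree reduction, which I expect to be the main obstacle; the submodularity and complexity parts are then routine bookkeeping.
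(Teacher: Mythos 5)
Your first two steps are correct and coincide with the paper's argument, just phrased differently: the paper evaluates $f'$ at $X=L(v_l)$ for an optimal bipartition $B$ and observes that the vertices counted are those satisfying its conditions (1) or (2), which is exactly your set of ``split'' vertices; the vertices satisfying (1) are the duplications preceding the first speciation (Property~\ref{prpy:prelim}.2) and are the internal nodes of your prefix $P_B$, while the (2)-only vertices are its leaves and are never apparent duplications. So your identities $f'(X)=\#\{x:\ L(x)\ \mbox{split}\}=d_1(F,B)+\lambda(P_B)$ and the count $\lambda(P_B)\le d_1(F,B)+\tau_B$ are right. Two minor remarks: the detour through a minimum MDPP prefix is unnecessary (taking any optimal bipartition $B$ with $d_1(F,B)=d$ suffices), and the claim $P_B=I$ is not correct, since the split prefix also contains its boundary leaves; neither affects the structure of the argument. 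The complexity part is handled the same way as in the paper (Lemma~\ref{lem:submodular} plus the SFM algorithm of~\cite{IWASA-SODA09}).

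The genuine gap is the step you flagged yourself: you never prove that the prefix, or equivalently its (2)-only leaves, can be confined to a single gene tree, i.e.\ that $\tau_B=1$; without that you only obtain $c\le 2d+\tau_B$. This is not a routine omission, because in the forest setting the additive term can genuinely exceed $1$: take many two-leaf gene trees, each with leaves labelled by the same two genomes $a$ and $b$; every root is a speciation, so $d=0$, yet every root's label crosses the unique cut of $I(F)$, so $f'$ is the number of trees. Hence a general ``single-tree reduction'' of the kind you hope for cannot be established, and the bound $2d+1$ is only justified per tree met by the prefix. The paper's own proof closes exactly this step with a one-line leaf count (``the number of leaves of a binary tree having $i$ internal vertices is $i+1$''), i.e.\ it bounds $\lambda(P_B)\le d+1$ as if the prefix lay inside a single binary tree, which is precisely the assumption your more careful bookkeeping exposes. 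So you have correctly located the crux of the proof, but your proposal leaves it open, and as stated it cannot be closed in the generality you (and the theorem) would need.
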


\begin{proof}
Let $d$ be the minimum cost of a bipartition on $V=L(F)$ and let us
consider a bipartition $B$ on $V$ such that $d_1(F,B)=d$.  If
$X=L(v_l)$, then by definition $f'(X)$ is the number of vertices $x$
in $F$ such that there exists a couple $(s,t)\in X\times (V-X)$ such
that:

\begin{description}
\item (1) $(s,t)\in L(x_l)$ or $(s,t)\in L(x_r)$ or
\item (2) $(s,t)\in L(x)$ and $x$ is not an apparent duplication.
\end{description}
 
Moreover, any vertex $y$ of $F$ that is a strict ancestor of a vertex
satisfying (1) or (2), satisfies (1).  Then, the vertices of $F$
satisfying (1) or (2) form a prefix $I$ of $F$ such that only leaves
of $I$ can satisfy (2).  This induces that if $p$ is the number
vertices of $F$ satisfying (1) then $f'(X) \leq 2*p+1$ (since the
number of leaves of a binary tree having $i$ internal vertices is
$i+1$).  Finally, by definition, the cost $d_1(F,B)$ of $B$ is equal to
$p$ and then $p=d$.  

The complexity follows from the algorithm described in~\cite{IWASA-SODA09}.
\qed
\end{proof}

\section{The set of all optimal solutions}
\label{sec:all} 


A common approach in the Minimum Edge-Cut based algorithms used for
supertrees problem is to seek not only a single parsimonious
bipartition on the set of genomes, but all possible ones. The problem
is then finding not one optimal bipartition on the set of species, but
a partition compatible with all optimal
bipartitions~\cite{SEMPLE-DAM105,PAGE-WABI02}.  

Given a gene tree forest $F$, the partition of $L(F)$ compatible with
all optimal solutions of MDBP is such that two elements of $L(F)$ are
in the same part if and only if they belong to the same part in all
optimal bipartitions.  This partition is denoted by $\mathcal{PB}(F)$.
From the point of view of gene trees prefixes, there can also be
several optimal solutions for MDPP.  The unique partition of $L(F)$
compatible with all optimal solutions of MDPP is such that two
elements of $L(F)$ are in the same part if and only if they belong to
the same part in all partitions induced by optimal prefixes.  This
partition is denoted by $\mathcal{PP}(F)$.

\begin{proposition}\label{prop:equivalence}
  Given a gene tree forest $F$, $\mathcal{PB}(F)=\mathcal{PP}(F)$.
\end{proposition}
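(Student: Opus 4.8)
The plan is to set up a tight correspondence between the optimal prefixes of $F$ (the solutions of MDPP) and the optimal bipartitions on $L(F)$ (the solutions of MDBP), and then to read off the equality of the two consensus partitions. By Lemma~\ref{lem:reduction_melc} and the lemma relating the minimum label-size of an edge-cut of $H(F)$ to the minimum size of a prefix with at least two parts, the optimal values of MDBP and of MDPP coincide; let $d$ denote this common value.

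First I would show that every optimal bipartition yields an optimal prefix that refines it. Given an optimal bipartition $B$ with root $v$, let $I_B$ be the set of duplications of $F$ that precede the first speciation with respect to $B$. By Property 1.3, $I_B$ is a prefix of $F$, and by definition $|I_B|=d_1(F,B)=d$. By Property 1.2 and the construction of $H(F)$, every edge of $H(F)$ crossing the cut $E(B)$ carries a label in $I_B$; hence deleting from $H(F)$ all edges whose label lies in $I_B$ removes in particular every crossing edge, so every connected component of the resulting graph is contained in $L(v_l)$ or in $L(v_r)$. Consequently $P(I_B)$ refines $B$, so $|P(I_B)|\geq 2$, and since $|I_B|=d$ the prefix $I_B$ is optimal for MDPP.

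Conversely, I would show that from any optimal prefix one can extract an optimal bipartition separating any prescribed pair of leaves that the prefix separates. Given an optimal prefix $I$ and two leaves $s,t$ lying in distinct parts of $P(I)$, let $B$ be the bipartition with $L(v_l)$ equal to the part of $P(I)$ containing $s$ and $L(v_r)=L(F)-L(v_l)$. Each part of $P(I)$ lies entirely in $L(v_l)$ or in $L(v_r)$, so, by the prefix/edge-cut lemma, the label-size of $E(B)$, which equals $d_1(F,B)$ by Lemma~\ref{lem:reduction_melc}, is at most $|I|=d$; hence $d_1(F,B)=d$ and $B$ is an optimal bipartition that separates $s$ and $t$.

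Finally I would combine the two directions. If $s$ and $t$ are not in the same part of $\mathcal{PB}(F)$, some optimal bipartition $B$ separates them; then $I_B$ is an optimal prefix and $P(I_B)$ refines $B$, so $s$ and $t$ lie in distinct parts of $P(I_B)$ and hence are not in the same part of $\mathcal{PP}(F)$. Thus $\mathcal{PP}(F)$ refines $\mathcal{PB}(F)$. Symmetrically, if $s$ and $t$ are not in the same part of $\mathcal{PP}(F)$, some optimal prefix $I$ puts them in distinct parts of $P(I)$, and the bipartition built above from $I$ is optimal and separates $s$ and $t$, so they are not in the same part of $\mathcal{PB}(F)$; thus $\mathcal{PB}(F)$ refines $\mathcal{PP}(F)$. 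Two partitions that each refine the other are equal, so $\mathcal{PB}(F)=\mathcal{PP}(F)$. I expect the only delicate point to be the claim that $P(I_B)$ refines $B$: it relies on $P(I_B)$ being obtained by deleting \emph{every} edge whose label is in $I_B$ and not merely the crossing ones, which is what forces each connected component to stay on one side of the cut; the rest is bookkeeping on top of the lemmas of Section~\ref{sec:mec}.
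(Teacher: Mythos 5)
Your proof is correct and follows essentially the same route as the paper: the paper disposes of the proposition in one sentence by invoking the correspondence, via $H(F)$, between optimal bipartitions (minimum label-size edge-cuts) and optimal prefixes (minimum label sets), which is exactly the two-way correspondence you spell out using Property~1 and the lemmas of Section~\ref{sec:mec}. Your write-up is simply a more detailed version of that argument (including the mutual-refinement bookkeeping the paper leaves implicit), and the details you supply are sound.
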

\begin{proof}
  This result is straightforward consequence of the definition of
  $\mathcal{PB}(F)$ and $\mathcal{PP}(F)$ since the set of edges that
  belong to a minimum label-size edge-cut of $H(F)$ is exactly the set
  of edges that have a label belonging to a minimum size label-cut of
  $H(F)$.  \qed
\end{proof}

For the classical Minimum Edge-Cut problem, the problem of computing
the unique partition of the set of vertices of a graph compatible
with all minimum edge-cuts can be reduced to a decision problem on
edges of the graph that can be solved
efficiently~\cite{STOER-JACM44}. We generalize it below, and give also
an equivalent formulation in terms of gene trees prefixes:

\smallskip
\noindent{\sc Minimum Duplication Bipartition Edge Problem}:\\
{\bf Input:} A graph $G$ and an edge $e$ of $G$;\\
{\bf Output:} Does $e$ belong to a minimum label-size edge-cut of $G$.

\smallskip
\noindent{\sc Minimum Duplication Prefix Vertex Problem}:\\
{\bf Input:} A gene tree forest $F$ and a vertex $a$ of $F$;\\
{\bf Output:} Does $a$ belong to a minimum size prefix $I$ of $F$ such that  $|P(I)|\geq 2$.

\begin{lemma}\label{reduction-decisional}
  Given a gene tree forest $F$, computing $\mathcal{PB}(F)$ can be
  reduced to solving the Minimum Duplication Bipartition Edge Problem
  for each edge $(s,t)$ of $H(F)$ or the Minimum Duplication Prefix
  Vertex Problem for each vertex $a$ of $F$.
\end{lemma}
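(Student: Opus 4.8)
The plan is to show a two-way correspondence between the partition $\mathcal{PB}(F)$ and the answers to the two decision problems, using the dictionaries already established in the excerpt: Lemma~\ref{lem:reduction_melc} (bipartitions of $L(F)$ versus label-size of edge-cuts of $H(F)$), Lemma~\ref{lem:eq} (minimum labeled-edge-cuts versus minimum label-cuts), and the prefix/partition correspondence from Section~\ref{sec:mec}. By Proposition~\ref{prop:equivalence} we already know $\mathcal{PB}(F)=\mathcal{PP}(F)$, so it suffices to reduce the computation of $\mathcal{PP}(F)$ to the Minimum Duplication Prefix Vertex Problem, and separately reduce the computation of $\mathcal{PB}(F)$ to the Minimum Duplication Bipartition Edge Problem; the two reductions are parallel, so I would write one in detail and note the other is symmetric.

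\smallskip

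First I would fix notation: let $d^\ast$ be the minimum size of a prefix $I$ of $F$ with $|P(I)|\ge 2$ (equivalently, by the Lemma in Section~\ref{sec:mec}, the minimum label-size of an edge-cut of $H(F)$), and let $\mathcal{O}$ be the set of all such optimal prefixes. Define, for each internal vertex $a$ of $F$, $\chi(a)=1$ if $a$ lies in some $I\in\mathcal{O}$ and $\chi(a)=0$ otherwise; this is exactly the output of the Minimum Duplication Prefix Vertex Problem on input $(F,a)$. The key structural claim is: two leaves $s,t\in L(F)$ lie in the same part of $\mathcal{PP}(F)$ if and only if, in the graph $H(F)'$ obtained from $H(F)$ by deleting every edge whose label $a$ satisfies $\chi(a)=1$ (i.e. every edge that could be cut by some optimal prefix), $s$ and $t$ are still connected. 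One direction is immediate: if $s,t$ stay connected in $H(F)'$, then no optimal prefix separates them, so they are in the same part in every optimal partition $P(I)$. The other direction is the delicate one.

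\smallskip

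The hard part will be the converse: showing that if $s$ and $t$ are disconnected in $H(F)'$ then there is a \emph{single} optimal prefix $I^\ast\in\mathcal{O}$ separating them, i.e. that the "merged" failures of connectivity witnessed by different optimal prefixes can be amalgamated into one optimal prefix. This requires a submodularity/uncrossing argument on prefixes: given $I_1,I_2\in\mathcal{O}$, one shows that a suitable combination (for instance the prefix induced by the bipartition obtained by intersecting the corresponding vertex-parts in $H(F)$, or equivalently using the submodularity of the cut-set function $f(I(F))$ from Lemma~\ref{lem:submodular} together with the characterization of labeled cuts as label-cuts) is again an optimal prefix, and that separating pairs are preserved under this operation. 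This is exactly the generalization of the classical fact for Minimum Edge-Cut cited from~\cite{STOER-JACM44}: the family of minimum cuts is closed under the lattice operations, so a canonical coarsest partition compatible with all of them exists and is obtained edge-by-edge. Concretely I would argue that the union over all $I\in\mathcal{O}$ of the induced partitions $P(I)$, refined to their common coarsening, coincides with the connected-components partition of $H(F)'$, because each individual optimal cut is recovered as a sub-cut of the global edge set $\{(s,t): \chi(\text{label})=1\}$ and no finer distinction than these components can be forced.

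\smallskip

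Finally I would assemble the reduction: to compute $\mathcal{PP}(F)$ (hence $\mathcal{PB}(F)$ by Proposition~\ref{prop:equivalence}), call the Minimum Duplication Prefix Vertex Problem once for each internal vertex $a$ of $F$ to obtain $\chi$, delete the corresponding edges from $H(F)$, and output the connected-components partition of the resulting graph; symmetrically, calling the Minimum Duplication Bipartition Edge Problem for each edge $(s,t)$ of $H(F)$ directly marks the edges lying in some minimum label-size cut, and deleting exactly those edges and taking connected components yields $\mathcal{PB}(F)$ — here one uses Lemma~\ref{lem:eq} to see that an edge lies in a minimum labeled-edge-cut precisely when its label lies in a minimum label-cut, so the edge-version and the vertex-version delete the same edge set. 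The number of oracle calls is $O(n)$ in the vertex formulation and $O(k^2)$ in the edge formulation, plus a single linear-time connected-components computation, which establishes the claimed reduction. The only real content beyond bookkeeping is the uncrossing lemma in the third paragraph; everything else is a direct unwinding of the equivalences already proved in Sections~\ref{sec:mec} and~\ref{sec:sfm}.
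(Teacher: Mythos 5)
Your overall reduction is the same as the paper's: query the decision oracle once per edge of $H(F)$ (equivalently, by Proposition~\ref{prop:equivalence}, once per vertex of $F$), delete the marked edges, and output the connected components. The paper's proof is exactly this bookkeeping plus a one-sentence characterization: two leaves are in the same part of $\mathcal{PB}(F)$ iff they remain connected after deleting every edge lying in some minimum label-size edge-cut. The genuine gap is in your third paragraph, where you try to prove the nontrivial direction of that characterization. The uncrossing argument you invoke does not apply: the submodular function of Lemma~\ref{lem:submodular} is the cut-set function of the \emph{modified} graph $I(F)$, whose minimizers only give a 3-approximation for MDBP, while the cut-set function of $H(F)$ itself is exactly what Lemma~\ref{lem:notsubmodular} shows need not be submodular; so you cannot uncross two optimal cuts of $H(F)$ this way. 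Moreover, the ``classical fact'' you appeal to is false in the form you state it, already for ordinary minimum edge-cuts: take two vertices $u,v$ joined by three internally disjoint paths of length two through $a$, $b$, $c$. The only minimum cuts isolate one of $a,b,c$, so no single minimum cut separates $u$ from $v$, yet deleting the union of all minimum-cut edges disconnects them. Hence ``disconnected after deleting all oracle-marked edges'' does not in general imply ``separated by one optimal bipartition'', and the components partition can be strictly finer than the literal ``same side in every optimal bipartition'' relation that defines $\mathcal{PB}(F)$.

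So the step you yourself call ``the only real content'' is unsupported and, at the level of generality at which you argue it, wrong. To close it you would either need specific structural properties of the graphs $H(F)$ (nothing in your sketch uses any), or you would read $\mathcal{PB}(F)$ operationally, as the paper implicitly does following the mincut-supertree tradition it cites, as the components-after-deletion partition, in which case the lemma reduces to the easy direction plus the assembly in your last paragraph and no uncrossing is needed. The easy direction, the use of Proposition~\ref{prop:equivalence} to pass between the edge and vertex formulations, and the count of oracle calls are all fine and match the paper.
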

    
\begin{proof}
  The partition $\mathcal{PB}(F)$ is such that two elements of $L(F)$
  are in the same part if and only if they belong to the same
  connected component in the graph obtained from $H(F)$ after removing
  all edges belonging to a minimum label-size edge-cut of $H(F)$. The
  equivalence with the Minimum Duplication Prefix Vertex Problem
  follows immediately from Proposition~\ref{prop:equivalence}.  \qed
\end{proof}



\section{Conclusion}
\label{sec:conc}

We showed that computing a parsimonious first speciation in the gene
duplication model can be approximated in polynomial time with a ratio
of $3$. As far as we know this is the first time a constant
approximation algorithm is proposed in relation with the problem of
inferring species trees using gene duplications. This result was
obtained by describing it in terms of edge-cuts in particular graphs,
that can be computed in polynomial time through submodular functions
minimization.

The complexity status of the Minimum Duplication Bipartition Problem
is still open, but its relationship to the Minimum Label-Cut Problem,
together with the fact the corresponding cut-set function is not
submodular seems to indicate it is NP-hard. As for classical minimum
edge-cut, this question is strongly related to the question of the
complexity of deciding if a single edge of a graph belongs to a
minimum labeled-edge-cut, or if an edge label belongs to a minimum
label-cut. From an algorithmic point of view, there is still a large
gap between the near linear time complexity of the simple Minimum
Edge-Cut Problem and the high complexity of our version of the Minimum
Labeled-Edge-Cut Problem. In order to make the approximation algorithm
we propose useful in a practical context, some advances on the
complexity of the Minimum Labeled-Edge-Cut Problem are necessary.

Finally, it is important to remark that all these questions have been
described in terms of edge-cuts in a graph defined by a gene tree
forest, but not all graphs can be induced by a gene trees forest.
Hence, for our particular purpose, if one hopes to show the problems
we introduced are tractable, it is probable that we should use
specific properties of these graphs.  Alternatively, using prefixes of
gene trees could be a way to attack these questions.

\smallskip
\noindent{\bf Acknowledgements.} Work supported by an NSERC Discovery
grant to C.C. and a fellowship from the ANR BRASERO project
(ANR-06-BLAN-0045) for A.O. We thank Tamon Stephen for pointing to the
link with submodular function minimization.


\newpage 

\section*{Appendix}

\paragraph{Reconciliation.} A \emph{subtree insertion} in a forest $F$ consists in
grafting a new subtree onto an existing branch of $F$. An 
\emph{extension} of $F$ is a tree which can be obtained from $F$ by 
subtree insertions on the branches of $T$.

A gene tree forest $F$ is said to be \emph{DS-consistent} with a
binary species tree $S$ on $\mathcal{G}$ if, for every vertex $x$ in
$F$ such that $|L(x)\geq 2|$, there exists a vertex $v$ in $S$ such
that $L(x)=L(v)$ and one of the following conditions holds:
\begin{description} 
\item (D) either $L(x_l)=L(x_r)$ (indicating a Duplication),
\item (S) or $L(x_l)=L(v_l)$ and $L(x_r)=L(v_r)$ or inversely
  (indicating a Speciation).  
\end{description} 

A \emph{reconciliation} between a gene tree forest $F$ and 
a binary species tree $S$ on on $\mathcal{G}$ is an extension $R(F,S)$
of $F$ that is DS-consistent with $S$.  A reconciliation 
between $F$ and $S$ implies an unambiguous evolution scenario for the 
gene family $F$ where a vertex of $R(F,S)$ that satisfies property (D) 
represents a duplication, and an inserted subtree represents a gene 
loss. Vertices of $R(F,S)$ that satisfy property (S) represent speciation events.  
It is immediate to see that every vertex $x$ of $T$ such that 
$L(x_l) \cap L(x_r) \neq \emptyset$  will always be a duplication vertex in 
any reconciliation $R(F , S )$ between $F$ and $S$. 

\paragraph{Optimization problems.} The following cost measure is
considered for a reconciliation $R(F,S)$ between a gene tree forest $F$ and 
a species tree $S$: the \emph{duplication cost} of $R(F,S)$ denoted by
  $d(R(F,S),S)$ is the number of duplications induced by $R(F,S)$.
 When a species tree is not known, the following  natural combinatorial
  optimization problems is often considered.\\

\noindent{\sc Minimum Duplication Problem I}:\\ 
{\bf Input:} A gene tree forest $F$  on  $\mathcal{G}$;\\ 
{\bf Output:} A binary species tree $S$ such that $d(R(F,S),S)$ is minimum.\\
 
Given a gene tree forest $F$ and a species tree $S$  on 
$\mathcal{G}$, the LCA mapping between $F$ and $S$ induces a 
reconciliation between $F$ and $S$ where an internal 
vertex $x$ of $F$ leads to a duplication vertex if 
$M (x_l ) = M (x)$ and/or $M (x_r ) = M (x)$. 
In \cite{GORECKI06}, it has been shown that the reconciliation $M(F,S)$ between 
$F$ and $S$ defined by their LCA mapping  minimizes the duplication cost.


\begin{thebibliography}{10}

\bibitem{BANSAL-RECOMB07} 
  M.S.~Bansal, J.G.~Burleigh, O.~Eulenstein and A.~Wehe.  
  \newblock Heuristics for the gene-duplication problem: A
  $\Theta(n)$ speed-up for the local search.
  \newblock In {\em RECOMB 2007}, {\em LNCS} 4453, pp.238--252.
  \newblock Springer-Verlag.


\bibitem{BININDA-EDMONDS-2004}
  O.R.P~Bininda-Edmonds (ed).
  \newblock 2004.
  \newblock Phylogenetic supertrees: Combining information to reveal
  the Tree of Life. 
  \newblock Kluwer.

\bibitem{BLOMME-GB7}
  T.~Blomme,  K.~Vandepoele,  S.~De~Bodt,  C.~Silmillion,  S.~Maere, 
  Y.~van~de~Peer.
  \newblock 2006.
  \newblock The gain and loss of genes during 600 millions years of
  vertebrate evolution.
  \newblock {\em Genome Biol.} 7:R43.

\bibitem{BRYANT-1997}
  D.~Bryant.
  \newblock 1997.
  \newblock Hunting for trees, building trees and comparing trees:
  theory and methods in phylogenetic analysis.
  \newblock Ph.D. thesis, Dept. of Math., Univ. of Canterbury, New
  Zealand. 

\bibitem{CHAUVE-RECOMB09}
  C.~Chauve and N.~El-Mabrouk.
  \newblock 2009.
  \newblock New perspectives on gene family evolution: losses in reconciliation and a link with supertrees.
  \newblock To appear in {\em RECOMB 2009}, {\em LNCS} 5541.
  \newblock Springer-Verlag. 

 
 \bibitem{FUJISHIGE91}
 S. ~Fujishige.
 \newblock 2005.
 \newblock Submodular Functions and Optimization. Second edition.
 \newblock {\em Annals of Discrete Math.} 58.
 \newblock  Elsevier.

 \bibitem{GORECKI06}
 \newblock  P. G\'{o}recki and J. Tiuryn.
 \newblock 2006.
 \newblock  DLS-trees: a model of evolutionary scenarios.
 \newblock {\em Theoret. Comput. Sci.}, 359:378--399.

\bibitem{IWASA-SODA09}
  S.~Iwata and J.B.~Orlin.
  \newblock 2009.
  \newblock simple combinatorial algorithm for submodular function
  minimization.
  \newblock In {\em SODA 2009}, pp. 1230--1237.
  \newblock SIAM.
 
\bibitem{HALLETT-RECOMB00}
  M.T.~Hallett and J.~Lagergren.
  \newblock 2000. 
  \newblock New algorithms for the duplication-loss model.
  \newblock In {\em RECOMB 2000}, pp. 138--146.
  \newblock ACM Press. 

\bibitem{HAHN-PLOSGENETICS3}
  M.W.~Hahn,  M.V.~Han and S.-G.~Han.
  \newblock 2007.
  \newblock Gene family evolution across 12 {\em Drosophilia} genomes.
  \newblock {\em PLoS Genet.},  3:e197.

\bibitem{JHA-LABELCUT-02}
  S.~Jha, O.~Sheyner and J.M.~Wing. 
  \newblock 2002.
  \newblock Two formal analyses of attack graphs. 
  \newblock In {\em CSFW 2002}, pp. 49--63.
  \newblock IEEE Computer Society.

\bibitem{MA-SIAMJC30}
  B.~Ma,  M.~Li and L.~Zhang.
  \newblock 2000.
  \newblock From gene trees to species trees.
  \newblock {\em SIAM J. Comput.},  30:729--752.


\bibitem{PAGE-WABI02}
  R.D.M.~Page.
  \newblock  2002.
  \newblock Modified mincut supertrees.
  \newblock In {\em WABI 2002}, {\em LNCS} 2452, pp.537--551.

\bibitem{SANDERSON-BMCEB7}
  M.J.~Sanderson and M.M.~McMahon.
  \newblock 2007.
  \newblock Inferring angiosperm phylogeny from EST data with
  widespread gene duplication. 
  \newblock {\em BMC Evol. Biol.},  7:S3.
  
\bibitem{SCORNA-LATA09}
  C.~Scornavacca, V.~Berry and V.~Ranwez
  \newblock 2009.
  \newblock From gene trees to species trees through a supertree approach.
  \newblock To appear in {\em LATA 2009}, {\em LNCS} 5457.
  \newblock Springer-Verlag. 

\bibitem{SEMPLE-DAM105}
  C.~Semple and M.~Steel.
  \newblock 2000.
  \newblock A supertree method for rooted trees.
  \newblock {\em Discrete Appl. Math.}, 105:147--158.

\bibitem{STOER-JACM44}
  M.~Stoer and F.~Wagner
  \newblock 1997.
  \newblock A simple Min-Cut algorithm
  \newblock {\em J. ACM.}, 44:585--591.


\bibitem{WAPINSKI-NATURE07}
 I.~Wapinski,  A.~Pfeffer,  N.~Friedman and A.~Regev.
 \newblock 2007.
 \newblock Natural history and evolutionary principles of gene
 duplication in fungi.
 \newblock {\em Nature},  449:54--61.

\bibitem{WEHE-BIOINFO24}
  A.~Wehe, M.S.~Bansal, J.G.~Burleigh and O.~Eulenstein.
  \newblock 2008.
  \newblock DupTree: a program for large-scale phylogenetic analyses using gene tree parsimony.
  \newblock {\em Bioinformatics}, 24:1540--1541.



\end{thebibliography}
\end{document}